\newtheorem{theorem}{Theorem}
\def\R{\mathbb{R}}
\title{Incentive-Theoretic Bayesian Inference for Collaborative Science} 
\author{Stephen Bates, Michael I. Jordan, Michael Sklar, Jake A. Soloff} 
\date{\today}
\begin{document}

\maketitle

\begin{abstract}
Contemporary scientific research is a distributed, collaborative endeavor, carried out by teams of researchers, regulatory institutions, funding agencies, commercial partners, and scientific bodies, all interacting with each other and facing different incentives. 
To maintain scientific rigor, statistical methods should acknowledge this state of affairs. To this end, we study hypothesis testing when there is an agent (e.g., a researcher or a pharmaceutical company) with a private prior about an unknown parameter and a principal (e.g., a policymaker or regulator) who wishes to make decisions based on the parameter value. 
The agent chooses whether to run a statistical trial based on their private prior and then the result of the trial is used by the principal to reach a decision.
We show how the principal can conduct statistical inference that leverages the information that is revealed by an agent's strategic behavior---their choice to run a trial or not. 
In particular, we show how the principal can design a policy to elicit partial information about the agent's private prior beliefs and use this to control the posterior probability of the null. 
One implication is a simple guideline for the choice of significance threshold in clinical trials: the type-I error level should be set to be strictly less than the cost of the trial divided by the firm's profit if the trial is successful.
\end{abstract}

\section{Introduction}

Scientific research is increasingly distributed throughout government, academia, and business. Teams of researchers interface with regulatory institutions, funding agencies, commercial partners, scientific bodies, and each other. For example, drug development is conducted by large teams in pharmaceutical companies, with clinical trials carried out in collaboration with academic scientists, whose results are in turn analyzed by public regulators. In such cases, the conclusions from a scientific data analysis materially impact the many stakeholders involved. That is, in addition to its role in quantifying evidence and supporting decision making, statistical analysis serves as a gatekeeping or reward system, with high stakes for the participating parties. Contrast this situation with the classical viewpoint in which statistical protocols are used only as an analytic aid for an impartial researcher. This classical viewpoint underlies the established guidelines for statistical practice. In light of the way statistics is used in present-day research, however, it is essential that we expand the scope of statistical analysis and provide principles that allow economic principles to be wedded with statistical principles in an overall endeavor viewed as a sociotechnical system. 

In any consideration of foundational issues in statistical inference, it is essential to take into account that there are two major, differing perspectives on statistical inference---the frequentist and the Bayesian.  The frequentist paradigm provides guarantees on the correctness of an inference procedure over repeated runs of the procedure.  This paradigm is particularly natural in settings in which a software artifact is built and subsequently used by many individuals on many kinds of data and for many scientific problems.  The Bayesian paradigm focuses on the specific problem at hand, making full use of probability theory to combine past knowledge with current observations, via conditioning and the computation of posterior probabilities.  It provides opportunities for exploiting expert knowledge, requiring effort to elicit such knowledge but potentially rewarding the effort via inferences that can be tailored and sensitive.  Moreover, the Bayesian paradigm accommodates the merging of analyses from multiple investigators who may have different prior distributions and different data~\citep{morris1977combining}, and it also provides for meta-analyses performed by a central aggregator~\citep{sutton2001bayesian}.  These advantages are compelling for the kinds of collaborative science that is our focus, but the need to specify prior distributions remains a stumbling block in many domains.  This is particularly true in complex problem domains in which there are many interacting variables, where it can be difficult to formulate the necessary high-dimensional prior probability distributions, even for domain experts.  This can introduce an undesirable subjectivity and even arbitrariness into scientific decision-making.  Accordingly, the frequentist paradigm, with its focus on confidence intervals and $p$-values, remains the standard in scientific and medical research.  But the limitations of the paradigm are also widely recognized; in particular, it struggles to combine multiple confidence intervals and or $p$-values from multiple sources, and frequentist error control need not translate into good decision-making.

In this work, we show how an incentive-theoretic perspective sheds light on Bayes-frequentist duality and leads to new guidelines for statistical analysis. In particular, we show how a regulator can conduct a Bayesian statistical analysis without assuming a prior distribution. Instead, we view the researcher (who has more information about the subject of study than the regulator) as acting according to an implicit prior distribution and show how the regulator can deduce information about this prior distribution from the researcher's behavior. Loosely speaking, when a researcher makes a large investment in a research undertaking, this credibly signals to the regulator that they have a high degree of belief that the research will be successful, and the regulator can use this information as part of their analysis. 

\subsection*{Overview of our setup}

We consider a setting with two parties: the \textit{principal} (e.g., a regulator such as the FDA) and the \textit{agent} (e.g., a pharmaceutical company). The agent makes an investment to conduct research but must garner approval from the principal. The principal wishes to ensure that only correct conclusions are reached, and they have the ability to approve research results or not. For example, a pharmaceutical company must run a clinical trial to demonstrate that drugs are safe and effective to a regulator who controls whether a new drug is approved.

We view the goal of a scientific study as inferring some aspect of the underlying state of the world from observations. Before conducting the study, we assume that the agent is better informed and has a \emph{prior belief} about the quantity to be inferred. We refer to this quantity as a parameter $\theta$, and we assume that it is a random variable taking values in a sample space $\Theta$ according to a prior distribution $Q$. This distribution is private and not known to the principal. Moreover, the principal cannot simply ask the agent about $Q$, since the agent may engage strategically and report incorrect information for their own benefit. Instead, as a requirement for approving the agent's research, the principal requires that the agent support the conclusions by gathering data (e.g., by running a clinical trial). The agent must decide based on their private information whether or not to invest effort to run such a trial to gain approval. 

In this work, we show how the principal can set up a hypothesis-testing protocol that elicits information about the agent's private prior distribution. This enables the principal to perform Bayesian inference supported by revealed information rather than by assuming a prior distribution.

\section{Bayesian Inference Supported by Revealed Preferences}
\label{sec:using_agent_prior}
\subsection{Setting}
The parameter space $\Theta$ is partitioned into the null set $\Theta_0$ and nonnull set $\Theta_1$. The principal wishes to approve nonnulls but not nulls---this will be formalized shortly. The agent has a prior $Q$ over $\Theta$ that is private and not known to the principal. Noisy evidence about $\theta$ may be available to the principal, however; indeed, the agent may choose to gather data that is visible to both the principal and agent.  We encode the evidence as a random variable $X \in \mathcal{X}$ drawn from the distribution $P_\theta$.

The interaction between the principal and agent goes as follows:
\begin{center}
\fbox{
    \begin{minipage}{5.9in}
    \textbf{Principal-Agent Statistical Trial} \vspace{.01cm} \\
\begin{enumerate}
    \item The principal choose a decision policy $f : \mathcal{X} \to \{\mathsf{approve}, \mathsf{deny}\}$ \vspace{.1cm}
    \item The agent chooses to run a trial at cost $C$, or opts out. \vspace{.1cm}
    \item If the trial is run, evidence is collected according to $X \sim P_\theta$. \vspace{.1cm}
    \item The principal makes a decision $f(X)$. \vspace{.1cm}
    \item If the principal makes the decision $\mathsf{approve}$, the agent receives reward $R$.
\end{enumerate}
\vspace{.1cm}
    \end{minipage}}
\end{center}
\vspace{.1cm}
Putting this together, the agent receives net reward 0 if they opt out, reward $R - C$ if they run a trial that yields decision $\mathsf{approve}$ or reward $-C$ if they run a trial that yields decision $\mathsf{deny}$.
The agent cost $C$ and reward $R$ are known in advance to the principal and the agent.

Without essential loss of generality, we will consider the evidence~$X$ to be a $p$-value for the null hypothesis $\theta \in \Theta_0$; that is, $P_\theta(X \le t) \le t$ for all $\theta \in \Theta_0$. Moreover, we assume that the principal's decision rule is based on thresholding the $p$-value $X$ at a level $\tau$:
\begin{equation}
    f(x) = \begin{cases}
        \mathsf{approve} & x \le \tau \\
        \mathsf{deny} & x > \tau.
    \end{cases}
\end{equation}

We next turn to the agent. Let $\beta_\theta(\tau) = P_\theta(X \le \tau) = P_\theta(\mathsf{approve})$ denote the power function. The agent's expected profit if they run a trial based on their prior $Q$ is then $v_\tau(Q) = \mathbb{E}_{\theta \sim Q}[R \cdot \beta_\theta(\tau)] - C$. The agent additionally has some increasing utility function of their profit $u : \R\to \R$ with $u(0) = 0$, so their expected utility is $\mathbb{E}[u(R\cdot 1\{X\le \tau\} - C)]$. We assume the utility function~$u$ is concave, meaning the agent is risk averse. It follows from Jensen's inequality that the agent has negative expected utility whenever their expected profit is negative. We assume that the agent chooses to opt out when their expected utility is negative.

\subsection{Revealed preferences}

We next show that if the agent acts in their best interest, then the principal can draw conclusions about the posterior probability of the null hypothesis. Our basic assumption is that the agent runs a trial only if $v_\tau(Q) \ge 0$, opting out otherwise. We then have the following:

\begin{theorem}[Incentive-theoretic bound on the posterior odds of null]
\label{thm:posterior_odds}
Suppose the agent runs a trial only if $v_\tau(Q) \ge 0$. Then, when a trial is run the posterior odds of nonnull given approval are bounded from below:
\begin{equation}
\label{eq:post_odds}
    \frac{P\left(\theta \in \Theta_1 \mid \mathsf{approve}\right)}{P\left(\theta \in \Theta_0 \mid \mathsf{approve} \right)} \ge \frac{C / R - \tau}{\tau},
\end{equation}
where the probabilities are according to the agent's prior $Q$ and the randomness in $X$.
\end{theorem}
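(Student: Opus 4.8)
The plan is to convert the two hypotheses---the agent's participation condition $v_\tau(Q)\ge 0$ and the validity of the $p$-value---into, respectively, a lower bound on the total probability of approval and an upper bound on the probability of approving under the null, and then to combine these through the elementary identity that expresses posterior odds in terms of these two quantities. The whole argument is a decomposition of $P(\mathsf{approve})$ into its null and nonnull parts, with each inequality oriented to produce a \emph{lower} bound on the odds.

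First I would unfold the incentive condition. The assumption $v_\tau(Q)\ge 0$ is precisely $\E_{\theta\sim Q}[R\,\beta_\theta(\tau)]\ge C$, i.e. $\E_{\theta\sim Q}[\beta_\theta(\tau)]\ge C/R$. Since $\beta_\theta(\tau)=P_\theta(\mathsf{approve})$, averaging the power function over the prior gives exactly the prior-predictive probability of approval, $P(\mathsf{approve})=\E_{\theta\sim Q}[\beta_\theta(\tau)]$. Hence the agent's willingness to run the trial certifies $P(\mathsf{approve})\ge C/R$. This step is where the economics enters: a purely statistical quantity (the marginal approval probability) is lower-bounded by the observed strategic choice.

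Next I would control the mass that approval places on the null. Writing $P(\mathsf{approve},\,\theta\in\Theta_0)=\int_{\Theta_0}\beta_\theta(\tau)\,dQ(\theta)$ and using the $p$-value property $\beta_\theta(\tau)=P_\theta(X\le\tau)\le\tau$ for every $\theta\in\Theta_0$, I obtain $P(\mathsf{approve},\,\theta\in\Theta_0)\le \tau\,Q(\Theta_0)\le\tau$. To assemble the odds, set $A=P(\mathsf{approve})$ and $B=P(\mathsf{approve},\,\theta\in\Theta_0)$; then
\[
\frac{P(\theta\in\Theta_1\mid\mathsf{approve})}{P(\theta\in\Theta_0\mid\mathsf{approve})}=\frac{A-B}{B}=\frac{A}{B}-1,
\]
and substituting $A\ge C/R$ and $B\le\tau$ gives $A/B-1\ge (C/R)/\tau-1=(C/R-\tau)/\tau$, the claimed inequality.

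I expect the only genuinely substantive step to be the first: recognizing that the participation constraint is equivalent to a statement about the marginal approval probability, and then matching the direction of each inequality to the goal (a lower bound on the odds requires a lower bound on $A$ together with an upper bound on $B$). The remaining manipulations are routine, including the small, deliberate slack from discarding the factor $Q(\Theta_0)\le 1$ to reach the clean denominator $\tau$. The two edge cases are immediate and worth a line each: conditioning on $\mathsf{approve}$ is legitimate because $A\ge C/R>0$, and if $B=0$ the odds are infinite, so the bound holds trivially.
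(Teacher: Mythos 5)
Your proposal is correct and follows essentially the same route as the paper's proof: the participation constraint $v_\tau(Q)\ge 0$ gives the lower bound $\E_{\theta\sim Q}[\beta_\theta(\tau)]\ge C/R$ on the marginal approval probability, the $p$-value property gives the upper bound $\tau$ on the null part of that probability, and the odds follow by substituting both bounds (your $A$ and $B$ correspond to the paper's $a+b$ and $b$). Your explicit treatment of the edge cases ($A\ge C/R>0$ justifying the conditioning, and $B=0$ giving infinite odds) is a minor addition the paper leaves implicit.
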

The proof is elementary and presented in Appendix~\ref{sec:proofs}. This result means that the principal can control the posterior odds of a false discovery by choosing $\tau$.
To see the significance of this, let us rearrange~\eqref{eq:post_odds} to obtain
\begin{equation}
\label{eq:post_null_prob}
P\left(\theta \in \Theta_0 \mid \mathsf{approve}\right)
\le \tau R / C.
\end{equation}
This is a Bayesian form of error control, with the quantity on the left the posterior probability of the null under a decision to $\mathsf{approve}$. This is closely related to the false discovery rate (FDR), the expected fraction of false positives, although the latter is a frequentist statistic, defined with respect to the probability measure induced by repeated sampling without reference to any prior. We comment further on the relationship with FDR in Appendix~\ref{app:fdr_discuss}. Henceforth we use term \emph{Bayes FDR} to refer to the posterior probability in~\eqref{eq:post_null_prob}.

Since the right-hand side of~\eqref{eq:post_null_prob} is decreasing with $\tau$, the principal can guarantee a pre-specified Bayes FDR level $\alpha$ by setting $\tau= \alpha C / R$. The surprising takeaway is that the principal can use the agent's private prior to achieve a desired Bayes FDR level. The principal does not need to have their own prior beliefs about $\theta$ and can instead use the information revealed by the agent (i.e., whether the agent opts in) to learn information about $Q$ and draw conclusions accordingly.

\subsection{Agents with incorrect priors}

We next give another version of this result where the agent's prior is not assumed to be correct. In fact, the agents need not have priors at all. Our result is that no matter what the true values of the unknown parameters are, the fraction of false positives will be small whenever the agents are making money.

\begin{theorem}[Prior-free incentive-theoretic bound]
\label{thm:betting}
Consider $\tau < C/R$. Suppose a set of agents with parameters $\theta^{(1)},\dots,\theta^{(n)}$ opt into the trial above. Then
\begin{equation}
    \underbrace{\sum_{i=1}^n \left(R \cdot P_{\theta^{(i)}}\left(\mathsf{approve}\right) - C\right)}_{\text{expected total profit of agents}} \ge 0
\end{equation}
implies that
\begin{equation*}
    \underbrace{\left(\sum_{i=1}^n \mathbb{I}_{\{\theta^{(i)} \in \Theta_1\}} \cdot P_{\theta^{(i)}}\left(\mathsf{approve}\right)\right)}_{\text{expected number of true positives}} / \underbrace{\left(\sum_{i=1}^n \mathbb{I}_{\{\theta^{(i)} \in \Theta_0\}} \cdot P_{\theta^{(i)}}\left(\mathsf{approve}\right)\right)}_{\text{expected number of false positives}}
    \ge \left(\frac{C / R - \tau}{\tau}\right).
\end{equation*}
Moreover, if the inequality in the first display is strict, the inequality in the second display is also strict.
\end{theorem}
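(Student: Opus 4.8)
The plan is to translate everything into the single quantity $p_i \triangleq P_{\theta^{(i)}}(\mathsf{approve}) = \beta_{\theta^{(i)}}(\tau)$ and then argue by elementary algebra. First I would rewrite the profitability hypothesis: $\sum_{i=1}^n (R p_i - C) \ge 0$ is equivalent to $\sum_{i=1}^n p_i \ge nC/R$. Next I would split the index set according to whether $\theta^{(i)}$ is null or nonnull, writing $S_0 = \sum_{i : \theta^{(i)} \in \Theta_0} p_i$ for the expected number of false positives and $S_1 = \sum_{i : \theta^{(i)} \in \Theta_1} p_i$ for the expected number of true positives, so that the hypothesis reads $S_0 + S_1 \ge nC/R$ and the desired conclusion is $S_1 / S_0 \ge (C/R - \tau)/\tau$.

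The one external ingredient is the defining property of a $p$-value: for every null index, $P_{\theta^{(i)}}(X \le \tau) \le \tau$, hence $p_i \le \tau$. Summing over the (at most $n$) null indices gives $S_0 \le n\tau$, which is the crucial bound. From here the chain is $\tau(S_0 + S_1) \ge \tau \cdot nC/R = (C/R)(n\tau) \ge (C/R) S_0$, where the first inequality uses the profitability hypothesis multiplied by $\tau$ and the last uses $S_0 \le n\tau$. Rearranging $\tau(S_0 + S_1) \ge (C/R) S_0$ yields $\tau S_1 \ge (C/R - \tau) S_0$, i.e.\ the claimed lower bound on the ratio.

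I would state the conclusion in the cross-multiplied form $\tau S_1 \ge (C/R - \tau) S_0$ throughout, which sidesteps the only real edge case, namely $S_0 = 0$ (no false positives in expectation), where the ratio is infinite and the inequality holds vacuously. For the strictness claim, note that if the profitability hypothesis is strict then, since $\tau > 0$, the first inequality in the chain is strict, giving $\tau(S_0 + S_1) > (C/R) S_0$ and hence a strict conclusion. I do not anticipate a genuine obstacle here---the argument is essentially a one-line rearrangement once the $p$-value bound $S_0 \le n\tau$ is in hand; the only points requiring care are keeping the inequalities in cross-multiplied form so that the $S_0 = 0$ case is handled transparently, and recording the $\tau > 0$ requirement used in the strictness argument.
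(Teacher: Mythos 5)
Your proof is correct and is essentially the paper's own argument in aggregated form: the paper proves a per-agent domination $v^p_i \ge v^a_i$ between the principal's and agent's expected game values and sums over $i$, and that summed inequality, after cross-multiplying by $\tau$, is exactly your bound $S_0 \le n\tau$ combined with the profitability hypothesis. Both arguments rest on the same two ingredients---the $p$-value property $\beta_{\theta^{(i)}}(\tau) \le \tau$ for null agents and the hypothesis $S_0 + S_1 \ge nC/R$---and your explicit handling of the $S_0 = 0$ edge case via the cross-multiplied form and of the $\tau > 0$ requirement for strictness is sound (the paper glosses over both).
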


Note that the statement above does not rely on the agents having explicit priors. If an arbitrary collection of agents engages with the principal, then either the fraction of false positives is small or the agents are losing money in aggregate. 

The reader should view Theorem~\ref{thm:posterior_odds} and Theorem~\ref{thm:betting} as two expressions of the same underlying idea. They are linked by the betting interpretation of Bayesian probability; indeed, Theorem~\ref{thm:posterior_odds} means that if the agents are not behaving according to correct beliefs $Q$, they will lose money. Theorem~\ref{thm:betting} is one precise version of this statement.

\subsection{A simple illustration}
\label{sec:simple_example}
We now turn to an explicit example to demonstrate the upper bound. We consider a normal test statistic $Z \sim \mathcal{N}(\theta, 1)$, which is converted into a p-value $X = 1 - \Phi(Z)$, where $\Phi$ is the CDF of the standard normal distribution. We take the parameter space to be $\Theta = \{0, 1\}$ with null set $\Theta_0 = \{0\}$. We suppose there are two types of agents, promising agents and unpromising agents. Promising agents have a prior distribution such that $\theta = 1$ has probability $0.8$, whereas unpromising agents have a prior such that $\theta = 0$ has probability~$1$. Here, the agents' priors are correct. The cost of a trial is 1 and the reward of a successful trial is 100, and we suppose that each agent chooses to run a trial exactly when their expected value is nonnegative. We consider the case where 1\% of the agents are promising agents, which is motivated by clinical trials where nearly all drug candidates are abandoned before conducting a trial, see below for more detail.

We report the fraction of false discoveries across different choices of the p-value threshold ($\tau$) in Figure~\ref{fig:fdr_numerics}. We also report the upper bound from~\eqref{eq:post_null_prob}, which is a valid bound by Theorem~\ref{thm:posterior_odds}.
Notice that the fraction of false discoveries is discontinuous at points where agents change their behavior: promising agents run trials whenever $\tau \ge 0.0005$, and unpromising agents run trials whenever $\tau \ge 0.01$. The upper bound is relatively close to the actual value at these two points, but is somewhat loose in between.

\begin{figure}
    \centering
    \includegraphics[height = 2.5in]{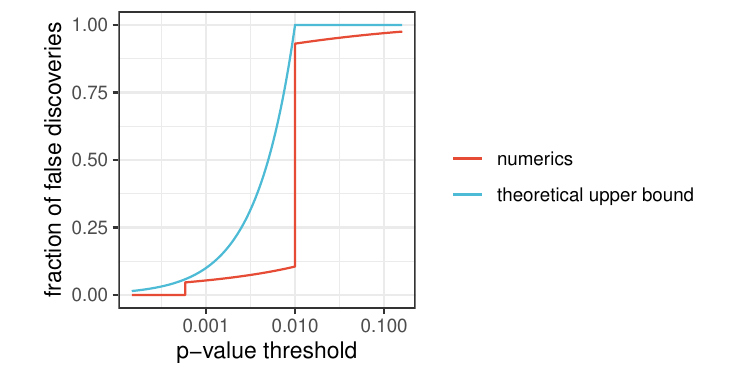}
    \caption{Comparison of the fraction of false discoveries computed numerically to the upper bound from Theorem~\ref{thm:posterior_odds} in the example of Section~\ref{sec:simple_example}.}
    \label{fig:fdr_numerics}
\end{figure}

\section{Implications for Clinical Trials}

We next turn to the choice of significance level when regulating clinical trials, focusing on the US Food and Drug Administration (FDA) as the regulator. A rough sketch of the typical clinical trial pipeline is as follows.
Before a new drug can be marketed, the FDA requires the pharmaceutical company to sponsor a confirmatory clinical trial to establish the drug's safety and efficacy. The costs of the trial are borne by the company, and if the drug is approved the company can then make a profit by selling the drug. Roughly speaking, the drug is approved if the $p$-value from a hypothesis test based on the data from the clinical trial is below a significance threshold---see Section~\ref{subsec:fda-details} for more detail regarding the FDA's current policy.

\subsection{Guidelines from our mathematical model}
What should the significance level be? We suppose that a trial costs $C$ and that the company would be rewarded with a revenue of $R$ if the drug were to be approved. We argue that the type-I error level of the test, $\tau$, should be set to roughly $C/4R$. Our analysis suggests this will result in a fraction of false positives (henceforth referred to as \emph{FDR}) of less than $25\%$ and that this threshold cannot be loosened greatly without resulting in a large FDR. 

This conclusion follows from three facts. First, the work of~\citet{tetenov2016economic} implies that the type-I error level needs to be less than $C/R$, since otherwise companies would be incentivized to run trials for drugs that are not promising or even known a priori to be ineffective. That work notes that there are a large number of drug candidates available that are abandoned before clinical trials are conducted---indeed, there are an estimated 30,000 abandoned candidates for every drug submitted to Phase III clinical trials. If FDA's policy is too loose, many of these unpromising candidates may be submitted to trials. In particular, they conclude that the resulting FDR would be near 1 if the type-I error level is chosen to be greater than $C/R$.

Second, the societal cost of false negatives (effective drugs that are not approved) is typically large compared to false positives (ineffective drugs that are approved), so we should seek a relatively loose FDR level such as 25\%. See the cost-benefit analysis of~\citet{isakov2019fda} for quantitative estimates of the costs of false positives to false negatives. Our conclusion about the right choice of type-I error level is not sensitive to the exact ratio of the societal cost of false positives to false negatives beyond the fact that we should tolerate a moderate fraction of false positives.

Third, our theoretical results suggest that the threshold need not be much smaller than $C/R$. For instance, Theorem~\ref{thm:posterior_odds} and Theorem~\ref{thm:betting} show that a threshold of $C/4R$ suffices to control the FDR at level 25\%. In more detail, we note that pharmaceutical companies incur many research and operating costs other than that of running clinical trials, but their revenues are primarily from selling FDA-approved drugs. Since these companies remain commercially viable, they must be acting in a way such they are not systematically losing money  when deciding whether to sponsor clinical trials. In view of Theorem~\ref{thm:betting}, we conclude that the FDR of approved drugs should be at most $R / C \cdot \tau$. Thus, setting $\tau = C/4R$ would result in an FDR of at most $25\%$.

Of course, the cost $C$ and reward $R$ are different for different drug candidates. Importantly, our analysis suggests that different p-value thresholds should be used for different candidates. We next turn to a detailed analysis of the costs, profits, and existing type-I error levels of clinical trials in the US.

\subsection{The FDA}
\label{subsec:fda-details}

Turning to current FDA policy, regulators have flexibility in analyzing evidence for approval, provided they determine that there is ``substantial evidence'' in cases of approval. In practice, this is usually either two confirmatory trials with positive results (which is generally as a p-value of less than $0.05$), or one multi-center trial with a more significant p-value, such as $0.005$.\footnote{The work of \cite{morant2019characteristics} concludes that from 2012 to 2016 all non-orphan, non-oncology drugs that were approved with a single trial had $p$-values below $0.005$.}
For more information on the criteria employed to select among running one or two trials, see~\cite{food2019demonstrating,haslam2019confirmatory,janiaud2021us}. There is also an accelerated approval process in which a drug may be approved under looser conditions, but we do not analyze that case here.

To analyze the current FDA policy, we consider two simplified statistical protocols that track the current practice. To encapsulate the two-trial case, we consider a protocol such as a drug is approved when two trials show significance at the $0.05$ level. Because a two-sided test is generally used, the resulting probability of a false positive is $0.025$. We call this the \emph{standard} protocol. To handle the one-trial case, we consider a protocol that approves a candidate drug if a single trial is run that yields a p-value below $0.005$. We refer to this as the \emph{modernized} protocol, since it resembles the more streamlined behavior of the FDA Modernization Act of 1997 \nocite{modernization_1997}. The p-value threshold of $0.005$ is based on the findings of~\cite{morant2019characteristics}.

We next turn to the cost of Phase III trials. \citep{moore2018estimated} estimate the median cost to be \$20 million, and \citep{moore2020variation} estimate median total costs to be $\$ 50 $ million. On the other hand, \cite{dimasi2016innovation} estimate average total Phase III costs as $\$255$ million. Lastly, \cite{wouters2020estimated} arrives at an estimate of $\$291$ million. However, these two analyses may overweight large trials. The former uses a private dataset (see the commentary~\cite{love2019criticism}), and the latter is based on costs published in SEC findings. In any case, the costs vary significantly from trial to trial. See \cite{schlander2021much} for a review. To be conservative, we use a value of $C = \$50$ million for total Phase III costs.

Turning to typical values of the reward, $R$, estimates of the total cost of development range from $\$161$ million to $\$4.54$ billion \citep{schlander2021much}, which provides guidance as to the plausible range of drug values. This profitability has a long right tail, with the most commercially successful drugs bringing in around$\$ 100$ billion in revenue~\citep{elmhirst2019biopharma}. We will analyze the case of drugs with profit ranging from $\$ 1$ billion to $\$ 100$ billion.

We report on the FDR bound implied by our theoretical model for the two protocols above in Table~\ref{tab:FDA_realistic}. We find that for typical drugs that would gain \$1 billion to \$10 billion profit if approved, the standard protocol that requires two trials results in a low FDR level. Thus, our analysis suggests the FDA should loosen the significance level in this case. On the other hand, for extremely profitable drugs earning \$100B or more, the protocols are not strict enough, and companies many be incentivized to run clinical trials for unpromising candidates. In this regime, our results do not provide reassurance that the FDR is controlled at a reasonable level.  

Our analysis of the FDA is based on a simplified mathematical model in which the company receives a reward of $R$ even for the approval of drugs that are in fact ineffective. As such, 
an important limitation of our analysis is that it does not account for other regulatory checks and commercial forces (e.g., insurer reluctance to compensate for drugs with less clear benefits, potential lawsuits, and reputational damage) that also work to deter false positives. In general, these checks would mean that our analysis is conservative---the fraction of false positives is perhaps lower than our bound suggests.

Zooming out, we highlight that by bringing together statistical hypothesis testing with private information and incentives, our analysis leads to a richer understanding of the number of false positives one can expect to see in clinical trials. It gives guidance about how one should set the p-value threshold that is tailored to the context, rather than the generic rule-of-thumb that the p-value should be less than $0.05$.

\begin{table}[t]
\centering
\small
\begin{tabular}{|l|r|r|r|r|}
\hline
\multicolumn{1}{|c|}{Protocol} & \multicolumn{1}{c|}{type-I error level ($\tau$)} & \multicolumn{1}{c|}{Revenue if approved ($R$)} & \multicolumn{1}{c|}{Expected profit if null} & \multicolumn{1}{c|}{Bayes FDR bound} \\
\hline
\hline
standard & 0.0625\% & \$1B & -\$49M  & 1.25\% \\
standard & 0.0625\% & \$10B & -\$44M &  12.5\% \\
standard & 0.0625\% & \$100B &  \$13M &  \texttt{n/a} \\
\hline
modernized & 0.5\% & \$1B &  -\$45M & 10\%  \\
modernized & 0.5\% & \$10B &  \$0M & \texttt{n/a} \\
modernized & 0.5\% & \$100B & \$450M & \texttt{n/a} \\
\hline
\end{tabular}

\caption{The behavior of two statistical protocols for varying drug market values, assuming a Phase III cost of $C=\$50$ million. The FDR bound reported in the rightmost column is $R/C \cdot \tau$, which is derived from Theorem~\ref{thm:posterior_odds} and Theorem~\ref{thm:betting}. An entry of \texttt{n/a} in the rightmost column indicates that the FDR cannot be bounded below $100\%$.}
\label{tab:FDA_realistic}
\end{table}

\section{Related Work}

We study the choice of type-I error level in a case where the regulator sets the statistical protocol and the agent decides whether to pay to conduct research, a setting introduced in the insightful work of~\cite{tetenov2016economic}. That work concludes that the optimal type-I error level must be set to be at least as strict as the agent reward divided by cost, otherwise, there may be a large number of false positives. Our work addresses the reverse direction, showing that the type-I error level should not be much stricter than the reward divided by cost. Going beyond the single hypothesis testing setup, \citet{viviano2021when} also consider a regulator setting the protocol for multiple hypothesis tests, and analyze when and how multiple testing adjustments should be carried out. In a different direction,~\citet{bates2022principalagent} show how the regulator can grant partial approval across multiple stages while controlling the number of false positives.

More broadly, the incentives of researchers and their interaction with statistical protocols is a topic commanding increasing attention in econometrics. For example, \cite{chassang2012selective} and \cite{ditillio2017} study randomized trials where an agent has a private action affecting the trial outcome. By contrast, in our work, the statistical trials are not influenced by hidden effort. 
 \citet{spiess2018optimal} studies a principal-agent problem where the agent's action space is the choice of an estimator, showing that the principal restricting the agent to unbiased estimators is worst-case optimal for the principal. \citet{yoder2022designing} study delegation of research to a researcher of unknown efficiency. Similarly,~\citet{mcclellan2022experimentation} derives how the principal should change the approval threshold as evidence is collected sequentially in order to incentivize a researcher to continue experimentation.

One particularly important case of researcher incentives is the $p$-hacking problem, wherein a researcher who is rewarded for positive findings may act in a way that causes false positives.
Classically, \citet{sterling1959publication, tullock1959publication} point out that when only positive findings are published, the total false positive rate of reported results in the scientific literature may be large. 
\citet{leamer1974false} shows how this same phenomenon can occur when a researcher chooses the specification  of a model. 
More recently, there is a vast statistical literature on how an objective analyst can account for multiple comparisons and selective inference~\citep[e.g.,][]{berk2013valid, taylor2015statistical}.
Incorporating incentives more explicitly, \cite{mccloskey2020critical} study how a principal should optimally set the $p$-value threshold when the researcher sequentially collects data and reports a subset of their findings. Similarly, ~\citet{frankel2022findings} study how to select papers for publication based on confidence and effective size. 

There is a growing understanding of persuasion and signaling in the communication of research results from an econometric perspective. An important early study in this direction is~\citet{carpenter2007regulatory}, who consider a game where a researcher can signal their confidence by seeking approval at earlier stages of development.
\citet{andrews2021model}~investigate how research output may be used by multiple downstream decision-makers.
\citet{henry2019research} study a researcher-approver persuasion model where the researcher continuously gathers data. \citet{williams2021preregistration} analyze pre-registration as a form of costly signaling by the analyst, and \citet{banerjee2020theory}~shows how randomized controlled trials emerge from a model with an analyst trying to persuade an adversarial audience. The present work can be situated within this line of thought as a setting in which the agent sends a costly signal by conducting research and the principal can use this signal as part of their statistical analysis.

Lastly, Theorem~\ref{thm:betting} takes inspiration from game-theoretic probability and statistics---see~\citet{shafer2005probability} and~\citet{ramdas2022game} for overviews. See the Appendix therein for a discussion of the philosophical underpinnings of game-theoretic statistics, especially in relation to frequentist and Bayesian paradigms. Game-theoretic statistics adopts the language of betting to quantify uncertainty. Evidence against a null hypothesis is measured by the outcome of a bet---one which is {\sl fair} in the sense that it would not be profitable under the null hypothesis. \citet{shafer2021testing} advocates for betting terminology as an effective framework for communicating results to a broad audience. In our work, of course, the agent has an actual financial stake in the outcome of the experiment. The betting analogy delivers another key benefit to game-theoretic statistics: bets can be made sequentially, so evidence can be accumulated gradually within an experiment or aggregated across studies. In our current work, however, the agent computes a $p$-value, effectively placing an all-or-nothing bet against the null hypothesis. \citet{bates2022principalagent} show how the principal can employ more general betting scores (also known as $e$-values) to align the incentives of the agent.

\section{Discussion}

We have shown how the structure of economic incentives can support statistical inference.  Our work provides a new connection between frequentist error control (e.g., controlling the false discovery rate) and Bayesian statistics. A frequent criticism of Bayesian statistical methods is their reliance on a prior distribution. Our work uses a prior distribution, but in a way that is objectively verifiable---in Theorem~\ref{thm:betting} we see that if the agent priors are invalid, then we would observe that the agents are losing money.
In fact, a closer inspection of the proof reveals that if FDR is not controlled, the agents would have a negative reward linear in the number of trials, i.e., a large amount of money. In settings such as clinical trials, we know that this is not the case, which lends credence to the use of the agents' implicit prior distributions as a basis for inference. More generally, accounting for and leveraging the broader economic context of statistical analysis is increasingly important as the process of research becomes more complex with many interacting, strategic stakeholders, and our work is one step in this direction.

\section*{Acknowledgements}
We thank Jon McAuliffe and Aaditya Ramdas for helpful discussions.

\bibliographystyle{plainnat}
\bibliography{incentives}

\newpage
\appendix
\section{Proofs and Additional Formal Results}\label{sec:proofs}
\begin{proof}[Proof of Theorem~\ref{thm:posterior_odds}]
First, note that since the agent opted in to the trial, the agent's prior $Q$ satisfies
\begin{equation}
\label{eq:participation_constr}
\mathbb{E}_{\theta \sim Q}[\beta_\theta(\tau)] \ge C / R.
\end{equation}
Next, we have
\begin{align*}
        \frac{P\left(\theta \in \Theta_1 \mid \mathsf{approve}\right)}{P\left(\theta \in \Theta_0 \mid \mathsf{approve} \right)} 
        &= \frac{P(\theta \in \Theta_1 \text{ and } X \le \tau) / P(X \le \tau)}{P(\theta \in \Theta_0 \text{ and } X \le \tau) / P(X \le \tau)} \\ 
    &= \frac{\mathbb{E}[\beta_\theta(\tau) \mathbb{I}_{\theta \in \Theta_1}]}{\mathbb{E}[\beta_\theta(\tau) \mathbb{I}_{\theta \in \Theta_0}]}.
\end{align*}
Denote the numerator by $a$ and the denominator by $b$. By~\eqref{eq:participation_constr}, we have that $a + b \ge C/R$. Thus, from the above display we have $a / b \ge (C/R - b) / b$, which is decreasing in $b$. Finally, note that $b \le \tau$ since $X$ is a valid $p$-value, yielding the desired result.
\end{proof}

\begin{proof}[Proof of Theorem~\ref{thm:betting}]
    For $i=1,\dots,n$, consider the following:
\begin{align*}
    v^p_i &= \mathbb{I}_{\{\theta^{(i)} \in \Theta_1\}} \cdot \beta_{\theta^{(i)}}(\tau) - \left(\frac{C / R - \tau}{\tau}\right) \cdot \mathbb{I}_{\{\theta^{(i)} \in \Theta_0\}} \cdot \beta_{\theta^{(i)}}(\tau) \\
    v^a_i &= \beta_{\theta^{(i)}}(\tau) - C/R.
\end{align*}
The reader should think of $v^p_i$ as the expected value of a game for the principal when interacting with an agent with parameter $\theta^{(i)}$ and $v^a_i$ as the expected value of the same game for the agent.

Next, we claim that $v^p_i \ge v^a_i$ for all $i$. This is immediate for $i$ such that $\theta^{(i)} \in \Theta_1$. To verify that $v^p_i \ge v^a_i$ when $\theta^{(i)} \in \Theta_0$, note that $\beta_{\theta^{(i)}}(\tau) \le \tau$, since $X$ is a $p$-value under the null. Thus, $v^p_i \ge - (C/R - \tau)$ and $v^a_i \le \tau - C/R$. We conclude $v^p_i \ge v^a_i$ for all $i$.

The first inequality in the desired theorem statement is equivalent to $\sum_i v_i^a \ge 0$, and the second inequality is equivalent to $\sum_i v_i^p \ge 0$. Thus, the desired result follows from the fact that $v^p_i \ge v^a_i$.
\end{proof}

\section{Discussion of the False Discovery Rate and its Variants}
\label{app:fdr_discuss}

In this section, we review various definitions of the false discovery rate (FDR) and show how they relate to the error control guarantee in Theorems~\ref{thm:posterior_odds} and~\ref{thm:betting}. FDR was first defined in the context of multiple hypothesis testing, where the aim is to control the expected proportion of false discoveries among all tests declared statistically significant. \citet{seeger1968note} attributes this definition (with different terminology) to a series of unpublished papers by G. Eklund from 1961-1963; the FDR was independently rediscovered and popularized by the seminal work of \citet{benjamini1995controlling}---see \citet{benjamini2000adaptive} for a review of the history of this concept in multiple hypothesis testing. 

The FDR is closely related to older concepts in Bayesian testing and classification theory---some fundamental connections were established by \citet{efron2001empirical,genovese2002operating,storey2003positive}. The Bayesian formulation makes clear the relevance of FDR when testing even a single hypothesis, so we will focus on a formal exposition of the Bayesian approach here. Our terminology follows \cite{efron2012large}.

Building upon the notation from Section~\ref{sec:using_agent_prior}, define the null prior probability $\pi_0 := Q(\Theta_0)$. For $j\in\{0,1\}$, let $F_j$ and $f_j$ denote the conditional distribution function and density function, respectively, of $X$ given $\theta\in \Theta_j$. By the law of total probability, the marginal distribution of $X$ is 
\[
F_X(\tau) := P(X\le \tau) = \pi_0 F_0(\tau) + (1-\pi_0)F_1(\tau),
\]
and similarly the marginal density $f_X$ is a mixture of the null and alternative densities $f_0$ and $f_1$, respectively. For any subset $\mathcal{Z}\subseteq \mathcal{X}$ of the sample space, the \emph{Bayes FDR} over $\mathcal{Z}$ is defined as the posterior probability of the null, given $X\in \mathcal{Z}$:
\[
\text{Fdr}(\mathcal{Z}) := P(\theta\in \Theta_0 \mid X\in \mathcal{Z}).
\]
For instance, if $\mathcal{Z}_\mathsf{approve} = [0,\tau]$ denotes the set of values of $X$ for which the principal decides $\mathsf{approve}$, then the corresponding Bayes FDR
becomes 
\[
\text{Fdr}(\mathcal{Z}_\mathsf{approve}) = P(\theta\in \Theta_0 \mid \mathsf{approve}) = \frac{\pi_0F_0(\tau)}{F_X(\tau)},
\]
which is the posterior probability~\eqref{eq:post_null_prob} from Theorem~\ref{thm:posterior_odds}. Similarly, the quantity in the conclusion of Theorem~\ref{thm:betting} is a rearrangement of the \emph{marginal false discovery rate}---the ratio of the expected number of false positives to the expected number of rejections.

Of course, the principal and agent do not only observe whether $X\in \mathcal{Z}_\mathsf{approve}$---they observe the exact value of~$X$. An even more relevant quantity, known as the \emph{local false discovery rate}, is defined as the posterior probability of the null given $X$ equals some constant $\tau$:
\[
\text{lfdr}(\tau)
:= P(\theta\in\Theta_0\mid X=\tau)
= \frac{\pi_0f_0(\tau)}{f_X(\tau)}.
\]
As the Bayesian false discovery rate $\text{Fdr}(\mathcal{Z})$ is defined for any region $\mathcal{Z}\subseteq\mathcal{X}$, the local false discovery rate $\text{lfdr}(\tau)$ is defined for any $\tau\in \mathcal{X}$. Indeed, the two are related by $\text{lfdr}(\tau) = \text{Fdr}(\{\tau\})$. An interesting open question is whether some analogue of Theorem~\ref{thm:posterior_odds} holds for the local false discovery rate under more assumptions.

\end{document}